\documentclass[a4paper,11pt]{article}
\usepackage{amsmath, amssymb, amsthm}
\usepackage{hyperref}
\usepackage[margin=1in]{geometry}
\newtheorem{definition}{Definition}
\newtheorem{theorem}{Theorem}
\newtheorem{lemma}{Lemma}
\newtheorem{corollary}{Corollary}

\title{Improved lower bounds of the time complexity of Shellsort}
\author{Zhenghan Zang}
\date{}

\begin{document}

\maketitle

\begin{abstract}
In this paper we develop the framework of using a parametrized mapping
\begin{align*}
    [\sigma(1), \sigma(2), \cdots, \sigma(n)] \mapsto \sigma(1)z + \sigma(2)z^2 + \cdots \sigma(n)z^n
\end{align*} to perform runtime analysis on Shellsort. In particular, we show that the worst-case time complexity of Shellsort using Tokuda's sequence \cite{Tokuda1992} is at least $\Omega(N^{1.26})$ with a generalisation of this result to any strictly decreasing gap sequence where each term at most a fixed distance away from a rational geometric sequence, and we also show that strictly decreasing gap sequences giving worst-case Shellsort time complexities of $O(N \log^c N)$ must have $\Omega(\log N / \log \log N)$ terms of order $\Omega(N / (\log N)^c)$. 
\end{abstract}

\section{Introduction}
\noindent Since its introduction by Shell in \cite{Shell1959}, the time complexity of Shellsort has been a long standing problem in computer science. For the past few decades, many of the most important works on the runtime analysis of Shellsort have been done using combinatorial and number theoretic methods. In \cite{Pratt1972}, Pratt showed that Shellsort using any gap sequence which approximates a geometric progression with integer coefficients has worst case time $\Omega(N^{3/2})$ as well as constructing a famous gap sequence consisting of numbers of the form $2^p3^q$ which gives both worst case and average time of $\Theta(N (\log N)^2)$, yet this sequence is seldomly used in practical implementations due to its inefficiency to compute which comes from its high gap density. \newline

\noindent For a long time it was unknown whether there existed gap sequences of length $O(\log N)$ whose worst case time complexities would break the $\Omega(N^{3/2})$ barrier, Sedgewick in \cite{Sedgewick1986} (having announced it in 1982) was the first to construct such a sequence which consists of numbers of the form $4^{j+1}+3\cdot2^j+1$, in the same paper he proved that Shellsort using this gap sequence has a worst-case time complexity of $O(N^{4/3})$. Based on this work, in \cite{SI1985} Incerpi and Sedgewick showed that for any $\epsilon > 0$ there exists a gap sequence of length $O_\epsilon(\log N)$ such that Shellsort using it has a worst case time complexity of $O(N^{1+\epsilon/\sqrt{\log N}})$. In a sense this is the optimal, as in \cite{PPS1993} it is proven that the Shellsort on array of length $N$ using any static gap sequence of length $m$ has worst case time at least $N^{1+c/\sqrt{m}}$ for some fixed $c>0$, provided that $N$ is large enough compared to $m$ (but $m = O(\log N)$ is sufficient). In the same paper it is also proven that Shellsort using any static gap sequence has time complexity of at least $\Omega(N(\log N /\log \log N)^2)$, a bound which has been later improved to $\Omega(N(\log N )^2/\log \log N)$ but for Shellsort sorting networks only in \cite{Cypher1993}. \newline

\noindent Although the Incerpi-Sedgewick gap sequences proposed in \cite{SI1985} give almost asymptotically optimal worst-case times, these aren't widely used in practice due to their practical inefficiencies. For the last few decades, research has been conducted on Shellsort to empirically look for gap sequences that perform well in practical scenarios, with sequences being proposed such as Tokuda's sequence introduced in \cite{Tokuda1992}, which consists of numbers of the form $\lceil (9\cdot (9/4)^k-4)/5\rceil$, and Ciura's sequence introduced in \cite{Ciura2001}. \newline

\noindent However, no version of Shellsort using one of these emperically derived gap sequences has any known individually proven nontrivial lower/upper bound stronger than the ones implied by \cite{PPS1993}. For many years, studying the Frobenius coin problem or some variant of it have been the standard approach when it comes to runtime analysis of Shellsort. Research based on this seems to have reached a bottleneck in recent years due to the computational intractability of the solutions to the Frobenius problem when given large sets of arguments and also that it isn't a good enough representation of Shellsort. This especially becomes a problem when the gap sequences to be dealt with are too irregular. \newline

\noindent In this paper we prove that the worst case time complexity of sorting an array of length $N$ using Tokuda's sequence is at least $\Omega(N^{1.26})$, by proving a more general form of this result for any gap sequence of at most a fixed distance away from a rational geometric sequence (see theorem 3, section 2). Additionally, we prove that strictly decreasing gap sequences giving worst-case Shellsort time complexities of $O(N \log^c N)$ must have $\Omega(\log N / \log \log N)$ terms of order $\Omega(N / (\log \log N)^c)$ (see theorem 6, section 2). We do this by introducing a new mechanism to perform runtime analysis on Shellsort which uses a parameterized mapping that takes a permutation $\sigma$ of the numbers $1, 2, \cdots, N$ and returns a polynomial about a complex variable $z$ (see definition 2, section 1.1). Our focus is on the effects the swaps performed when running Shellsort have on the values of this potential function for carefully chosen (according to the gap sequence given) values of $z$, by viewing each swap as an operation taking one permutation to another. This approach bypasses many of the issues faced by Frobenius-based approaches. We begin by making the following definitions. 

\subsection{Notations}
In this paper, denote by $N$ any positive integer greater than $1$ and we assume that we perform Shellsort on an array of size $N$ which contains each of $1, 2, \cdots, N$ exactly once. When given $N$, denote $L$ a (strictly) decreasing gap sequence with the first term less than $N$ and $\sigma$ a permutation of $1, 2, \cdots, N$, and let $id$ be the identity permutation where $id(i) = i$ for all $1 \leq i \leq N$. Also for a real number $r$, denote by $||r||$ the distance from $r$ to the nearest integer. 
\begin{definition}
We define $T(N, \sigma, L)$ as the number of swaps needed to sort the array $[\sigma(1), \sigma(2), \cdots, \sigma(N)]$ using shellsort with gap sequence $L$. 
\end{definition}

\begin{definition}
For any complex number $z$, we define the potential of the permutation $\sigma$ with respect to $z$ as the polynomial 
    \begin{align*}
        p(N, \sigma, z) = \sum_{i=1}^N \sigma(i)z^i 
    \end{align*}
\end{definition}

\begin{definition}
We call an operation defined on any permutation $\sigma$ a swap if it returns another permutation $\sigma'$ that is the same as $\sigma$ except having $\sigma'(i) = \sigma(j), \sigma'(j) = \sigma(i)$ for some $1 \leq i < j \leq N$. Denote this operator as $S_{i,j}$ and denote the distance of a swap as $dist(S_{i,j}) = j - i$. 
\end{definition}

\begin{definition}
For any $1 \leq x \leq N$, denote $l(x,N)$ as the number of terms $s$ in $L$ satisfying $x \leq s < N$.
\end{definition}
    
\section{Main Results}

\noindent We begin with the following theorem which constitutes the core idea of this paper. 

\begin{theorem}
For any $N$, a strictly decreasing gap sequence $L$ and any real numbers $r, x_0 > 0, \alpha \geq 0$, which satisfies that for any $s$ in $L$ with $s \geq x_0$ we have $||sr|| \leq \alpha$ (Here we use $||x||$ to denote the distance from $x$ to its nearest integer), then for any permutation $\sigma$ of $1, 2, \cdots, N$ we have
\begin{align}
    T(N, \sigma, L) \geq \frac{|p(N, \sigma, z) - p(N, id, z)|}{2(x_0 + \pi \alpha N)}
\end{align}
where $z = \exp (2\pi ir)$. 
\end{theorem}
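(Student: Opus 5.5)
The plan is a potential-function argument: bound how much a single swap performed by Shellsort can change $p(N,\cdot,z)$, then use the triangle inequality along the whole run. Shellsort starts at $\sigma$ and ends at $id$ after $T(N,\sigma,L)$ swaps, so
\begin{align*}
|p(N,\sigma,z)-p(N,id,z)| \leq \sum_{\text{swaps}} |\Delta p|,
\end{align*}
where $\Delta p$ is the change in potential caused by one swap. It therefore suffices to show $|\Delta p| \leq 2(x_0+\pi\alpha N)$ for every swap Shellsort performs.

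First I compute the effect of one swap. Consider $S_{i,j}$ applied to $\tau$, with $j = i+s$. The potential changes by
\begin{align*}
\Delta p = (\tau(j)-\tau(i))(z^i - z^j) = (\tau(j)-\tau(i))\, z^i (1-z^{s}).
\end{align*}
Since $|z|=1$ and $|\tau(j)-\tau(i)| \leq N$, we get $|\Delta p| \leq N\,|1-z^s|$. Every swap in the $s$-pass of Shellsort (implemented via insertion sort with gap $s$) is an exchange of adjacent elements in the $s$-chain, so it has distance exactly $s$ for some $s \in L$.

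Next I bound $|1-z^s| = |1-e^{2\pi i s r}| = 2|\sin(\pi s r)|$, splitting on the size of $s$.
\begin{itemize}
\item If $s \geq x_0$, then by hypothesis $\|sr\|\leq \alpha$. Hence $2|\sin(\pi sr)| = 2\sin(\pi\|sr\|) \leq 2\pi\alpha$, which gives $|\Delta p| \leq 2\pi\alpha N$.
\item If $s < x_0$, the crude bound $N|1-z^s|$ is too weak, and this is the main obstacle. To handle it I use that the swapped values differ by at most $N$, but also decompose the distance-$s$ transposition differently. Write $\Delta p = (\tau(j)-\tau(i))\sum_{k=i}^{j-1}(z^k - z^{k+1})$; that alone still gives a factor of $N$.
\end{itemize}

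The better route for $s<x_0$ is to not bound swap by swap, and instead amortize. Let $v=\tau(j)$ be the element moving left in the pass. In insertion sort it moves left past elements larger than it, and its total value change is at most $N$. More usefully, use $|\Delta p| \leq |\tau(j)-\tau(i)|\cdot |1-z^s| \leq 2|\tau(j)-\tau(i)|$ and bound $\sum |\tau(j)-\tau(i)|$ over all swaps by something like $2x_0 T$? That fails in general.

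So I instead reinterpret the intended bound: for small gaps $s<x_0$, use $|1-z^s|\le s|1-z|\le 2\pi r s$? That does not give $x_0$ either. The remaining option is to swap the roles of $N$ and $s$. Write $\Delta p = (\tau(j)-\tau(i))(z^i-z^j)$ and express the potential difference via positions rather than values: $p = \sum_v v z^{\tau^{-1}(v)}$. Then, summing by parts over values, $p(\tau)-p(id)$ can be rewritten with $\sum_{m}\left(\sum_{v\ge m}(z^{\tau^{-1}(v)}-z^{v})\right)$. Under this rewriting, a distance-$s$ swap changes each partial sum by at most $|z^i-z^j|\le \min(2, 2\pi\|sr\|)$ for each $m$ between the two values. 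That still yields factor $N$ times the per-swap change.

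My committed plan is therefore the per-swap bound $|\Delta p|\le 2\max(x_0,\pi\alpha N)$. I will try to establish the small-gap case as $|\Delta p|\le 2x_0$ by pairing with the $z^{\,\cdot}$ geometric identity $|1-z^s| = |\sum_{k<s} z^k|\,|1-z|$, expecting this to be where the argument needs the most care.
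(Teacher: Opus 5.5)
Your large-gap estimate is correct, but the reduction to a per-swap bound cannot work. For $s<x_0$ the hypothesis gives no control over $z^s$, and the value difference $|\tau(j)-\tau(i)|$ can be of order $N$. So no bound $|\Delta p|\le 2x_0$ (or $\le 2(x_0+\pi\alpha N)$) holds for small-gap swaps, even for swaps Shellsort actually performs. Concretely, take $L=(2,1)$, $r=\tfrac12$ (so $z=-1$), $x_0=2$, $\alpha=0$. The hypothesis holds because $\|2\cdot\tfrac12\|=0$, and your target per-swap bound is $4$. Let $N=2m$ and let $\sigma$ be the already $2$-sorted array $1,m+1,2,m+2,\dots,m,2m$. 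The $2$-pass does nothing. The first swap of the $1$-pass exchanges $m+1$ and $2$ at positions $2,3$, which changes the potential by $|2-(m+1)|\cdot|z^2-z^3|=2(m-1)=N-2$. The identity $|1-z^s|=|\sum_{k<s}z^k|\,|1-z|$ only yields $|1-z^s|\le s|1-z|$, and $|1-z|$ may equal $2$. So the step you flag as needing ``the most care'' is false, and your value-difference amortizations fail for the same reason.

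The missing idea is to handle all small-gap swaps collectively, through a different quantity than $p$, and to use the order in which Shellsort processes its gaps, which your proposal never exploits.

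\begin{itemize}
\item Because passes use decreasing gaps, every swap of distance $>x_0$ precedes every swap of distance $\le x_0$. Let $\omega$ be the permutation reached at the transition.
\item Apply your triangle-inequality argument only to the path $\sigma\to\omega$. This gives $|p(N,\sigma,z)-p(N,\omega,z)|\le 2\pi\alpha N u$, where $u$ is the number of large-gap swaps.
\item For the rest, $|z|=1$ gives $|p(N,\omega,z)-p(N,id,z)|\le\sum_i|\omega(i)-i|$, the total displacement of the values from their sorted positions.
\item A swap of distance $s$ moves exactly two values by $s$ positions each, so it changes this displacement by at most $2s\le 2x_0$.
\item The remaining $T-u$ swaps take $\omega$ to $id$, where the displacement is $0$. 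Hence $\sum_i|\omega(i)-i|\le 2x_0(T-u)$.
\end{itemize}

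Adding the two bounds gives $|p(N,\sigma,z)-p(N,id,z)|\le 2T(x_0+\pi\alpha N)$. This works only because the small-gap segment ends at $id$: displacement relative to $id$ is stable under short swaps, whereas $p$ itself can jump by order $N$, as the example shows.
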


\begin{proof}
\noindent In fact we will prove a stronger result: let $S_{i_1, j_1}, S_{i_2, j_2}, \cdots, S_{i_t,j_t}$ be a sequence of swaps of minimum length such that $(S_{i_t, j_t} \circ \cdots \circ S_{i_2, j_2} \circ S_{i_1, j_1})(\sigma) = id$, where the distance of the swaps $j_k - i_k$ is non-increasing and all these distances belong to the gap sequence $L$, then denoting this $t$ by $t(N, \sigma, L)$ (we'll from now on write $t$ for short) the same bound in (1) holds with $T(N, \sigma, L)$ replaced by $t$. Note that since the sequence of swaps being carried out when running shellsort can be written as such a sequence (though of not necessarily the minimum length) we must have $T(N, \sigma, L) \geq t$, so proving this result would directly imply the theorem above. \newline 

\noindent Let $u$ be the largest number such that $j_u - i_u > x_0$ and let $\omega = (S_{i_u, j_u} \circ \cdots \circ S_{i_2, j_2} \circ S_{i_1, j_1})(\sigma)$. We begin by noting that for any complex number $z$ we have $|p(N, \sigma, z) - p(N, id, z)| \leq |p(N, \omega, z) - p(N, id, z)| + |p(N, \sigma, z) - p(N, \omega, z)|$. \newline

\noindent Bounding the first difference is easy: just note that whenever $|z| = 1$ (which is the case for the $z$ we have defined):  
    \begin{align*}
        \begin{split}
        |p(N, \omega, z) - p(N, id, z)| & = |\sum_{i=1}^N \omega(i)z^i - \sum_{i=1}^N iz^i| \\ & = |\sum_{i=1}^N (\omega(i) - i)z^i|
         \\ & \leq \sum_{i=1}^N|\omega(i) - i| \\ & \leq 2tx_0
        \end{split}
    \end{align*}
the last inequality holds since the sum of the changes in positions of each $i$ going from $[1, 2, \cdots, N]$ to $[\omega(1), \omega(2), \cdots, \omega(N)]$ is at most $2(t-u)x_0 \leq 2tx_0$. \newline

\noindent Now note that for any $|z| = 1$ we have $|z - 1| = 2|\sin (\pi ||\arg (z)/(2\pi)||)|$. Now since the distance from any real number to its nearest integer is between $0$ and $1/2$ it can be verified that $|z - 1|/||\arg (z)/(2\pi)|| =2|\sin (\pi ||\arg (z)/(2\pi)||)|/||\arg (z)/(2\pi)||$ lies bet en $4$ and $2\pi$. This therefore gives $|z -1| \leq 2\pi||\arg (z)/(2\pi)||$ for all unit complex numbers $z$. Now setting $z = \exp (2\pi i r)$, since for any $s$ in $L$ with $s \geq x_0$ we have $|z^s - 1| \leq 2\pi ||sr|| \leq 2\pi \alpha$.  \newline

\noindent Using this $z$ we can now bound $|p(N, \sigma, z) - p(N, \omega, z)|$. Note that for any permutation $\sigma'$ and a swap $S_{k,k+s}$ of distance $s$ where $s \in L, x_0 \leq s < N$ we have 
    \begin{align*}
        \begin{split}
        |p(N, S_{k,k+s}(\sigma'), z) - p(N, \sigma', z)| & = |\sum_{i=1}^N (S_{k,k+s}(\sigma')(i) - \sigma'(i))z^i|
        \\ &= |\sigma'(k)z^{k+s}+\sigma'(k+s)z^k - \sigma'(k)z^k - \sigma'(k+s)z^{k+s}| 
        \\ &= |\sigma'(k) - \sigma'(k+s)||z^s - 1|
        \\ &< 2\pi \alpha N
        \end{split}
    \end{align*}

\noindent Therefore as $\omega = (S_{i_u, j_u} \circ \cdots \circ S_{i_2, j_2} \circ S_{i_1, j_1})(\sigma)$, since for every $1 \leq v \leq u$ we have $j_v - i_v \in L, x_0 \leq j_v-i_v \leq N$, this means that
    \begin{align*}
        \begin{split}
            |p(N, \sigma, z) - p(N, \omega, z)| &= |\sum_{v=1}^u (p(N, (S_{i_v, j_v}\circ \cdots \circ S_{i_1, j_1}) (\sigma), z) - p(N, (S_{i_{v-1}, j_{v-1}}\circ \cdots \circ S_{i_1, j_1}) (\sigma), z))| \\
            &\leq \sum_{v=1}^u |p(N, (S_{i_v, j_v}\circ \cdots \circ S_{i_1, j_1}) (\sigma), z) - p(N, (S_{i_{v-1}, j_{v-1}}\circ \cdots \circ S_{i_1, j_1}) (\sigma), z)|
            \\ &\leq 2u\pi \alpha N
            \\ & \leq 2t \pi \alpha N
        \end{split}
    \end{align*}

\noindent Combining the bounds we've obtained so far gives
    \begin{align*}
        \begin{split}
            |p(N, \sigma, z) - p(N, id, z)| &\leq |p(N, \omega, z) - p(N, id, z)| + |p(N, \sigma, z) - p(N, \omega, z)| 
            \\ &\leq 2tx_0 + 2t \pi \alpha N
            \\ &= 2t(x_0 + \pi \alpha N)
        \end{split}
    \end{align*}
\newline

\noindent So for any permutation $\sigma$ of $1, 2, \cdots, N$, 
    \begin{align*}
        t(N, \sigma, L) \geq \frac{|p(N, \sigma, z) - p(N, id, z)|}{2(x_0 + \pi \alpha N)}
    \end{align*}
\noindent and the theorem follows. (Note: equality of (1) may be achieved iff $\sigma = id$.)
\end{proof}

\noindent Note that the above result can be viewed as in a sense a generalisation of theorem 2.11 in Pratt's paper \cite{Pratt1972}. To see this we'll first prove a following lemma. 

\begin{lemma}
There exists an absolute constant $c_0 > 0$ such that for any $n$ unit complex numbers $z_1, z_2, \cdots, z_n$, there exists a permutation $\omega$ of $1, 2, \cdots, n$ such that $|\omega(1)z_1+\omega(2)z_2+\cdots +\omega(n)z_n| \geq c_0n^2$. 
\end{lemma}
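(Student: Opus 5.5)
The plan is to reduce to a one-dimensional problem by projecting onto a well-chosen direction, and then assign the largest labels to the indices whose projection is largest.

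\textbf{Step 1: choose a direction.} I will find a unit vector $e^{i\theta}$ such that at least a fixed fraction of the $z_k$ have large projection onto it. Concretely, partition the unit circle into four closed quarter-arcs centred at $1, i, -1, -i$. By pigeonhole, some arc contains at least $n/4$ of the $z_k$. Rotating everything by a unimodular factor does not change the modulus of the sum, so I may assume this arc is centred at $1$. Then every $z_k$ in it satisfies $\mathrm{Re}(z_k) \geq \cos(\pi/4) = 1/\sqrt{2}$. Let $A$ be this set of indices and put $m = |A| \geq n/4$.

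\textbf{Step 2: the assignment.} I will not try to control the contribution of the indices outside $A$; it suffices to make it small in real part relative to the contribution of $A$. Define $\omega$ to give the $m$ largest labels $n-m+1, \dots, n$ to the indices in $A$ in any order. The remaining labels $1, \dots, n-m$ go to the other indices, ordered so that smaller labels go to indices with more negative real part, though any order will do. Write $S=\sum_k \omega(k) z_k$. Then
\begin{align*}
\mathrm{Re}(S) \geq \sum_{k\in A}\omega(k)\tfrac{1}{\sqrt2} - \sum_{k\notin A}\omega(k),
\end{align*}
and this crude bound can fail when $m$ is only $n/4$. So the choice needs refining.

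\textbf{Step 3: the refinement, which is the main obstacle.} The difficulty is making the positive part dominate. Rather than four arcs, I will use the averaging principle. Consider $f(\theta) = \sum_k \omega_\theta(k)\,\mathrm{Re}(e^{-i\theta} z_k)$, where $\omega_\theta$ assigns labels in increasing order of $\mathrm{Re}(e^{-i\theta}z_k)$. By the rearrangement inequality, $\omega_\theta$ maximizes $\sum_k \omega(k)x_k$ for $x_k=\mathrm{Re}(e^{-i\theta}z_k)$. Subtracting the constant $\frac{n+1}{2}\sum_k x_k$, the identity permutation gives a sum of zero, so the maximum of $\sum_k(\omega(k)-\frac{n+1}{2})x_k$ is at least that of any assignment. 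In particular, pairing the top half of the $x_k$ with the top half of the labels gives at least $\frac{n}{4}\big(\sum_{\text{top half}}x_k - \sum_{\text{bottom half}}x_k\big)/2$ up to constants.

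Next I need some $\theta$ for which $x_k$ has spread of order $n$ in total. Averaging over $\theta$ uniform, $\sum_k |x_k - \mathrm{median}|$ has expectation at least $c n$. This holds because for each fixed $z_k$ the values $\mathrm{Re}(e^{-i\theta}z_k)$ are spread like $\cos$ of a uniform angle, so they cannot concentrate near a single value for most $k$ simultaneously. Alternatively, one can argue directly that $\mathbb{E}_\theta \sum_k |x_k| = 2n/\pi$, and handle the mean term separately.

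Finally, I convert to the modulus. I have $|S| \geq \mathrm{Re}(e^{-i\theta}S)$, and
\begin{align*}
\mathrm{Re}(e^{-i\theta}S) = \sum_k(\omega(k)-\tfrac{n+1}{2})x_k + \tfrac{n+1}{2}\sum_k x_k.
\end{align*}
If $\sum_k x_k \geq 0$, the bound $|S|\geq c_0 n^2$ follows. Otherwise, I replace $\theta$ by $\theta+\pi$, which flips every $x_k$ and the ordering symmetrically, so the first term is unchanged and the second becomes nonnegative. This sign trick is what makes the argument close. The step I expect to need the most care is the lower bound of order $n$ for the centered spread $\sum_k|x_k - \mathrm{median}|$ for some $\theta$. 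I would prove it by averaging $\sum_{k<l}|x_k-x_l|$ over $\theta$, which equals $\frac{2}{\pi}\sum_{k<l}|z_k-z_l|$. I would treat separately the case where most $z_k$ cluster, in which case $|\sum_k z_k| \geq n/2$ and the identity-type assignment already gives order $n^2$.
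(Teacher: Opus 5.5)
Your proposal is correct in outline and takes a genuinely different route from the paper. The paper keeps the direction from your Step 1. It rotates the densest quarter-arc (holding $\lambda n\ge n/4$ points) to be centred at $1$, labels the points in increasing order of real part, and bounds $\mathfrak{R}(\sum_i i z_{(i)})$ directly. The top $\lambda n$ labels sit on points of real part at least $\sqrt2/2$. Maximality of the arc caps how many points lie near $-1$, giving $\mathfrak{R}(z_{(i)})\ge-\cos(i\pi/(4\lambda n))$ for $i>\lambda n$. An explicit integral then finishes, with a split $\lambda\le1/2$ versus $\lambda>1/2$. You instead pick the direction by averaging. You write $\mathrm{Re}(e^{-i\theta}S)$ as a centred rearrangement term, invariant under $\theta\mapsto\theta+\pi$, plus $\frac{n+1}{2}\sum_k x_k$, whose sign that substitution flips. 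This reduces the lemma to finding one direction with spread of order $n$, which $\mathbb{E}_\theta|\mathrm{Re}(e^{-i\theta}w)|=\frac{2}{\pi}|w|$ supplies. The paper's route names the direction explicitly but leans on sector counting and numerics. Its bound of $(4\theta/\pi)\lambda n$ points near $-1$ is only valid after rounding $4\theta/\pi$ up to an integer, so that computation needs care. Yours uses no geometry or numerics and gives a clean explicit constant.

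Some statements must be repaired in the write-up.
\begin{itemize}
\item ``The identity permutation gives a sum of zero'' is false. What you need is that for the sorted assignment and a median $m$,
\[
\sum_k\Bigl(\omega_\theta(k)-\tfrac{n+1}{2}\Bigr)x_k=\sum_j\Bigl|j-\tfrac{n+1}{2}\Bigr|\,|x_{(j)}-m|\ge\frac{n}{4}\sum_k|x_k-m|,
\]
by Chebyshev's sum inequality on each half.
\item The unconditional claim $\mathbb{E}_\theta\sum_k|x_k-\mathrm{median}|\ge cn$ is false (take all $z_k$ equal), so the case split is essential, not optional. To close it, use $\sum_k|x_k-m|\ge\frac{1}{n-1}\sum_{k<l}|x_k-x_l|$ together with
\[
\sum_{k<l}|z_k-z_l|\ge\frac12\sum_{k<l}|z_k-z_l|^2=\frac12\Bigl(n^2-\Bigl|\sum_k z_k\Bigr|^2\Bigr).
\]
The spread argument then covers every configuration with $|\sum_k z_k|<n/2$. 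If $|\sum_k z_k|\ge n/2$, the sorted assignment for $\theta=\arg\sum_k z_k$ gives $|S|\ge\frac{n+1}{2}|\sum_k z_k|\ge n^2/4$. Altogether $c_0=3/(16\pi)$.
\item Your ``alternative'' also works and needs no split. From $n|m|\le|\sum_k x_k|+\sum_k|x_k-m|$, the better of $\theta$ and $\theta+\pi$ gives at least
\[
\frac{n}{4}\sum_k|x_k-m|+\frac{n}{2}\Bigl|\sum_k x_k\Bigr|\ge\frac{n}{8}\sum_k|x_k|,
\]
and averaging $\sum_k|x_k|$ over $\theta$ yields $c_0=1/(4\pi)$.
\end{itemize}
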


\noindent (Note: in fact only a much weaker and far more obvious version of this lemma is required for all later works in this paper, that is the special case of having $z_i = z_0^i$ for some unit complex number $z_0$ for $i = 1, 2, \cdots, n$)

\begin{proof}
    Consider a quarter sector of the unit complex circle which contains the maximum amount of points among $z_1, \cdots, z_n$. Assume without loss of generality that this is the sector which can be bisected by the zero radian line and say that it contains $\lambda n$ of the points where $\lambda \geq 1/4$. So we consider the sector of size $2\theta$ (where $\theta \geq \pi/4$) which can be bisected by the $\pi$-radian line then we get that this sector contains at most $(4\theta/\pi)\lambda n$ of the points. So assuming that $z_1, ... z_n$ are already ordered in terms of increasing real parts then for every $i > \lambda n$ we'd have $\mathfrak{R}(z_i) \geq -\cos (i\pi/(4\lambda n))$. If $\lambda \leq 1/2$ we therefore have:
    \begin{align*}
        &\mathfrak{R}(z_1+2z_2+\cdots+nz_n)\\ &\geq -(1+2+\cdots+\lambda n) - \sum_{i = \lambda n+1}^{(1-\lambda )n} i \cos (i\pi/(4\lambda n)) + (\sqrt2/2)((1-\lambda)n+1 + \cdots + n) \\
        & = \frac{2\sqrt{2}\lambda - (2+\sqrt{2})\lambda^2}{4}n^2- \int_{\lambda n}^{(1-\lambda)n}x \cos (x\pi/(4\lambda n))dx + O(n) \\
        & = \frac{2\sqrt{2}\lambda - (2+\sqrt{2})\lambda^2}{4}n^2- (4\lambda n/\pi)((1-\lambda)n \cos ((1-\lambda)\pi/(4\lambda)) - \lambda n (\sqrt{2}/2)) + O(n)\\
        & = (\frac{2\sqrt{2}\lambda - (2+\sqrt{2})\lambda^2}{4} - 4\lambda (1-\lambda)\cos ((1-\lambda)\pi/(4\lambda))/\pi + 2\sqrt{2} \lambda^2/\pi)n^2 + O(n) > 0.1n^2 
    \end{align*}

    \noindent and if $\lambda > 1/2$:
    \begin{align*}
        \mathfrak{R}(z_1+2z_2 + \cdots + nz_n) &\geq -(1 + 2 + \cdots + n/2) + (\sqrt{2}/2)(n/2 + 1 + \cdots + n) \\ &= (3\sqrt{2}/16-1/8)n^2 + O(n) 
    \end{align*}
    and the lemma follows. 
\end{proof}

\noindent Note what this implies is that as long as $||\arg (z)/(2\pi)|| \geq 1/N$ then there is a permutation $\sigma$ of $1, \cdots, N$ such that $|p(N, \sigma, z) - p(N, id, z)| \geq C_0N^2$ for some universal constant $C_0$ that is independent of $N$. To see this, if $|p(N, id, z)| > c_0N^2/2$ (where $c_0$ same as defined in lemma 1) then as for such a $|z|$ there must exist a number $1 \leq M \leq N$ with $|z^M - z| \geq 1/2$, just letting $\sigma$ be a cyclic shift by $M$ positions gives $|p(N, \sigma, z) - p(N, id, z)| > c_0N^2/4$. And if $|p(N, \sigma, z)| \leq c_0N^2/2$ then letting $\sigma = \omega$ where $\omega$ is the same as defined in lemma 1 gives $|p(N, \sigma, z) - p(N, id, z)| \geq c_0N^2/4$. So letting $C_0 = c_0/4$ works. \newline

\noindent Combining this with theorem 1 and lemma 1 gives:
\begin{corollary}
    There's an universal constant $C_0 > 0$ such that for any $N > 1$, a strictly decreasing gap sequence $L$ and any real numbers $r, ||r|| \geq 1/N, x_0 > 0, \alpha \geq 0$, which satisfies that for any $s$ in $L$ with $l \geq x_0$ we have $||sr|| \leq \alpha$, then there exists a permutation $\sigma$ of $1, 2, \cdots, N$ where
\begin{align}
    T(N, \sigma, L) \geq \frac{C_0N^2}{2(x_0 + \pi \alpha N)}
\end{align} 
\end{corollary}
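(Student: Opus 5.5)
The plan is to obtain Corollary 1 by feeding a permutation with large potential difference into Theorem 1. Fix $N$, $L$, $r$, $x_0$, $\alpha$ as in the statement and set $z = \exp(2\pi i r)$. Since $\arg(z)/(2\pi) \equiv r \pmod 1$, we have $||\arg(z)/(2\pi)|| = ||r|| \geq 1/N$. The work then splits into two steps: first produce a permutation $\sigma$ with $|p(N,\sigma,z) - p(N,id,z)| \geq C_0N^2$ for an absolute constant $C_0$, and then apply Theorem 1 to that $\sigma$.

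For the first step I will use Lemma 1 and split into two cases, comparing $|p(N,id,z)|$ with $c_0N^2/2$.

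\emph{Case 1: $|p(N,id,z)| \leq c_0N^2/2$.} Apply Lemma 1 to the unit complex numbers $z_i = z^i$, $i = 1, \ldots, N$. This gives a permutation $\omega$ with $|p(N,\omega,z)| \geq c_0N^2$. By the triangle inequality, $|p(N,\omega,z) - p(N,id,z)| \geq c_0N^2/2$.

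\emph{Case 2: $|p(N,id,z)| > c_0N^2/2$.} Take $\sigma$ to be a cyclic shift, say $\sigma(i) = i + M \pmod N$, chosen so that $p(N,\sigma,z)$ is essentially $z^{-M}p(N,id,z)$ plus a correction of size $O(N)$. Concretely, $p(N,\sigma,z) = z^{-M}p(N,id,z) + O(N\cdot\text{boundary terms})$. The correction needs care: the wrapped entries carry values shifted by $N$, which contributes a term of order $N \cdot M$ rather than $O(N)$. To avoid this I would instead use the identity $\sum_i (i+c) z^i = p(N,id,z) + c\sum_i z^i$ and control the geometric sum $\sum_i z^i$, which is $O(1/||r||)$. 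Then
\[ |p(N,\sigma,z) - p(N,id,z)| \geq |z^{M}-1|\,|p(N,id,z)| - O(N^2\text{ with small constant}). \]
I then pick $M \leq N$ with $|z^M - 1| \geq 1/2$. Such an $M$ exists because $||r|| \geq 1/N$: the points $z^M$ cannot all stay within a short arc of $1$ for $M \leq N$. This yields a bound $\gtrsim c_0N^2/4$.

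In both cases we get $|p(N,\sigma,z) - p(N,id,z)| \geq C_0N^2$ with $C_0 = c_0/4$.

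For the second step, the hypothesis "$||sr|| \leq \alpha$ for all $s \in L$ with $s \geq x_0$" is exactly the hypothesis of Theorem 1. Theorem 1 then gives
\[ T(N,\sigma,L) \geq \frac{C_0N^2}{2(x_0 + \pi\alpha N)}. \]

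The main obstacle is Case 2, the cyclic-shift estimate. The difficulty is making the error from the wrapped-around entries provably smaller than the main term $|z^M - 1|\,|p(N,id,z)|$. If that bookkeeping fails, I would fall back on a different choice. One option is a reversal $\sigma(i) = N+1-i$, whose potential is $(N+1)\sum_i z^i - p(N,id,z)$, so the difference becomes $|(N+1)\sum_i z^i - 2p(N,id,z)|$. When $\sum_i z^i$ is small compared with $N$, this is at least about $2 \cdot c_0N^2/2 - O(N/||r||)$. The other option is a product of local reversals. Either way, the goal is a difference of order $N^2$ with an absolute constant.
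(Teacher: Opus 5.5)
\textbf{Verdict: the route is the paper's, but Case 2 is asserted rather than proved, and your fallback for it fails.} You split on $|p(N,id,z)|$ against $c_0N^2/2$, use Lemma 1 in Case 1 and a cyclic shift in Case 2, then apply Theorem 1. Case 1 and the final appeal to Theorem 1 are fine.

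\textbf{Why the cyclic-shift step does not go through.} For $\sigma(i)\equiv i+M \pmod N$ one has exactly
\[
p(N,\sigma,z)=z^{-M}p(N,id,z)+z^{-M}(z^N-1)\sum_{v=1}^{M}v z^v .
\]
Case 2 forces $\|r\|<1/(c_0N)$, because $|p(N,id,z)|\le N/(4\|r\|)+1/(8\|r\|^2)$. So your estimate $\sum_i z^i=O(1/\|r\|)$ only gives $O(N)$. Whenever $z^N\neq 1$, the correction is genuinely of order $M^2$, hence of order $N^2$ when $M\asymp N$. For an arbitrary $M$ with $|z^M-1|\ge 1/2$, nothing makes it smaller than the main term $|z^M-1|\,|p(N,id,z)|$, so the ``$O(N^2)$ with small constant'' is unsupported. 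The paper's one-line argument has the same weakness: it treats the shift as an exact rotation, which holds only when $z^N=1$. Corollary 1, however, allows arbitrary real $r$, and Theorems 2 and 3 rely on that.

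\textbf{Why the reversal fallback fails.} For $\rho(i)=N+1-i$,
\[
p(N,\rho,z)-p(N,id,z)=-2\sum_i\bigl(i-\tfrac{N+1}{2}\bigr)z^i ,
\]
whose modulus is $2|D'(\theta)|$ with $D(\theta)=\sin(N\theta/2)/\sin(\theta/2)$ and $\theta=2\pi r$. By Rolle, $D'$ vanishes at some $\theta\in(2\pi/N,4\pi/N)$. There $\|r\|\ge 1/N$ holds and $|p(N,id,z)|=\tfrac{N+1}{2}|D(\theta)|\approx 0.11N^2$. This is a Case 2 instance, since any admissible $c_0$ is at most $2/\pi^2<0.21$ (take the $N$-th roots of unity in Lemma 1). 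Yet the reversal changes the potential by exactly $0$.

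\textbf{The missing idea is a short shift.} Take $M=\lceil c_0/(4\|r\|)\rceil$.
\begin{itemize}
\item Since $\|r\|<1/(c_0N)$, for $N\ge 4/c_0^2$ this gives $c_0/(4\|r\|)\le M\le c_0/(2\|r\|)\le N$ and $M\|r\|\le 1/2$.
\item Then $|z^M-1|=2\sin(\pi M\|r\|)\ge 4M\|r\|$, so the main term is at least $2c_0M\|r\|N^2$.
\item The correction is at most $|z^N-1|\,M(M+1)/2\le 2M^2$.
\item The ratio of correction to main term is at most $M/(c_0\|r\|N^2)\le 1/(2\|r\|^2N^2)\le 1/2$, using $\|r\|\ge 1/N$.
\item Hence $|p(N,\sigma,z)-p(N,id,z)|\ge c_0M\|r\|N^2\ge c_0^2N^2/4$.
\end{itemize}
Small $N$ can be absorbed into the constant, since swapping positions $1$ and $2$ already changes the potential by $|1-z|\ge 4/N$. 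With $C_0$ of order $c_0^2$ instead of $c_0/4$, the rest of your argument goes through unchanged.
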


\noindent From this corollary we obtain that:
\begin{theorem}
    For any reals $0 < \lambda_1 < \lambda_2, \mu > 0,0 < \nu, \eta < 1$ there exists a positive constant $C(\lambda_1, \lambda_2, \mu) > 0$ depending only on $\lambda_1, \lambda_2$ and $\mu$ such that for any large enough $N$ and a strictly decreasing gap sequence $L$, if there exists a real number $\lambda_1N^\nu \leq r' \leq \lambda_2N^\nu$ and a non-negative $d \geq 0$ such that for any $s \geq \mu N^{1-\eta}$ in $L$ there exists an integer $s'$ such that $|s - s'r'| \leq d$, then there is a permutation $\sigma$ of $1, 2, \cdots, N$ such that 
    \begin{align}
        T(N, \sigma, L) > \frac{C(\lambda_1, \lambda_2, \mu)}{d}N^{1+\min (\nu, \eta)}
    \end{align}
    In particular, when $\nu = \eta = 1/2$ we have 
    \begin{align}
        T(N, \sigma, L) > \frac{C(\lambda_1, \lambda_2, \mu)}{d}N^{3/2}
    \end{align}
\end{theorem}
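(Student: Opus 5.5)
Apply Corollary 1 with the natural choice $r = 1/r'$, so that $z = \exp(2\pi i/r')$. Since $r' \geq \lambda_1 N^\nu \to \infty$, we have $r<1/2$ for large $N$ and hence $||r|| = 1/r'$. The corollary needs $||r|| \geq 1/N$, i.e.\ $r' \leq N$. This holds for large $N$ because $r' \leq \lambda_2 N^\nu$ with $\nu<1$, so $\lambda_2 N^{\nu} \leq N$ eventually.

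Next, take $x_0 = \mu N^{1-\eta}$ and compute $\alpha$. For every $s \in L$ with $s \geq x_0$ there is an integer $s'$ with $|s - s'r'| \leq d$. Then
\begin{align*}
||s r|| = ||s/r'|| \leq |s/r' - s'| = |s - s'r'|/r' \leq d/r' \leq \frac{d}{\lambda_1 N^\nu},
\end{align*}
so we may take $\alpha = d/(\lambda_1 N^\nu)$.

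Corollary 1 then yields a permutation $\sigma$ with
\begin{align*}
T(N,\sigma,L) \geq \frac{C_0 N^2}{2\left(\mu N^{1-\eta} + \pi d N^{1-\nu}/\lambda_1\right)}.
\end{align*}
The denominator is at most $2\max(\mu, \pi d/\lambda_1)\cdot 2 N^{1-\min(\nu,\eta)}$. Hence $T \geq c\, N^{1+\min(\nu,\eta)}/\max(\mu,\pi d/\lambda_1)$ with $c=C_0/4$.

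The main obstacle is putting this bound into the form $C/d$ uniformly in $d$, including small $d$ and $d=0$, where the $\mu$ term dominates. When $d \geq 1$ (or more generally $d$ bounded below), $\max(\mu,\pi d/\lambda_1) \leq d\,(\mu + \pi/\lambda_1)$, which gives $C = C_0/(4(\mu+\pi/\lambda_1))$. That constant depends only on $\lambda_1$ and $\mu$, and trivially also on $\lambda_2$. For $d<1$, note that replacing $d$ by $\max(d,1)$ keeps the hypothesis valid, so we get the bound $C N^{1+\min(\nu,\eta)}$. This is at least $(C/d) N^{1+\min(\nu,\eta)}$ only when $d \geq 1$, so for very small $d$ the stated form $C/d$ is stronger than what the corollary delivers. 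I would therefore interpret the theorem as intended for $d\geq 1$, or with $d$ read as $\max(d,1)$; this is natural since $s, s'$ are integers and $r'$ is typically non-integral. I would flag this in the proof.

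The particular case $\nu=\eta=1/2$ then follows immediately by substitution. The strict inequality is obtained by shrinking $C$ slightly.
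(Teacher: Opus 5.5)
Your argument is the paper's proof: you substitute $x_0=\mu N^{1-\eta}$, $r=1/r'$ and $\alpha=d/r'$ (you use the slightly weaker $d/(\lambda_1N^\nu)$) into Corollary 1, and you additionally check the hypothesis $\|r\|\ge 1/N$, which the paper leaves implicit. Your caveat about small $d$ is also correct and is something the paper's one-line proof overlooks: for $L=(1)$ the hypothesis holds vacuously for every $d>0$ while $T(N,\sigma,L)\le\binom{N}{2}$, so the $C/d$ form is false as literally stated and genuinely needs $d$ bounded below, e.g.\ $d$ replaced by $\max(d,1)$ as you propose.
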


\begin{proof}
    This is trivial by corollary 1: just set $x_0 = \mu N^{1 - \eta}, r = 1/r', \alpha = d/r'$ in (2) and that's it. 
\end{proof}

\noindent Note that (4) is in a certain sense stronger than theorem 2.11 of \cite{Pratt1972} in that the conditions don't actually require the existence of a term in $L$ of order $\sqrt{N}$, nor does it assert any preconditions on the terms in $L$ less than $r$. On the other hand for $T(N, \sigma, L)$ to be of order of at least $N^{3/2}$ we require $d$ to be bounded by some constants, while in theorem 2.11 in \cite{Pratt1972} this is more lenient. In particular, (4) shows that any gap sequence whose terms are of at most a fixed constant away from some geometric progression with integer common ratio must be a worst case time of order at least $N^{3/2}$. \newline 

\noindent With corollary 1 we may in fact also deal with gap sequences which approximate a geometric progression with a noninteger rational common ratio such as Tokuda's gap sequence \cite{Tokuda1992}. In particular:

\begin{theorem}
    Suppose that $a>b>1$ are coprime integers and $q,d > 0$ be any fixed rationals. For each $N$ let $L_N$ be a decreasing gap sequence such that for any $s$ in $L_N$, there exists an integer $n$ such that $|s - q(a/b)^n| < d$, then
    \begin{align}
        \max_{\sigma} T(N,\sigma, L_N) = \Omega(N^{1+\frac{\log_ba-1}{2\log_ba-1}})
    \end{align}
\end{theorem}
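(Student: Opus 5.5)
The plan is to apply Corollary 1 with a single, carefully chosen $r$ that makes $sr$ nearly an integer for every large term $s$ of $L_N$. Write $q = u/v$ in lowest terms. Let $M$ be the largest integer with $q(a/b)^M < N + d$; then every $s\in L_N$ (all of which are $<N$) is within $d$ of $q(a/b)^n$ for some $n\le M$. Let $K\le M$ be an integer parameter to be chosen, and set
\begin{align*}
r = \frac{v\, b^{M}}{u\, a^{K}}, \qquad x_0 = q(a/b)^K - d .
\end{align*}
Take any $s\in L_N$ with $s\ge x_0$. Its index satisfies $K\le n\le M$, so
\begin{align*}
q(a/b)^n r = a^{n-K} b^{M-n}
\end{align*}
is an integer. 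Hence $\|sr\| \le |s - q(a/b)^n|\, r < dr$, and we may take $\alpha = dr$ in Corollary 1.

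Next I will compute the sizes of these quantities in terms of $N$. Put $t = \log_b a > 1$. Since $(a/b)^M \asymp N$, we get $b^M \asymp N^{1/(t-1)}$. Choose $K$ with $a^K \asymp N^{\kappa}$, where $\kappa$ is to be optimised. Then $x_0 \asymp (a/b)^K = a^{K(1-1/t)} \asymp N^{\kappa(t-1)/t}$ and $\alpha N \asymp N^{1 + 1/(t-1) - \kappa}$. Corollary 1 gives a lower bound of order $N^2/(x_0 + \alpha N)$, so I balance the two exponents:
\begin{align*}
\kappa\frac{t-1}{t} = 1 + \frac{1}{t-1} - \kappa \quad\Longrightarrow\quad \kappa = \frac{t^2}{(t-1)(2t-1)}.
\end{align*}
Then both $x_0$ and $\alpha N$ are of order $N^{t/(2t-1)}$. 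The resulting bound is $N^{2 - t/(2t-1)} = N^{1 + \frac{t-1}{2t-1}}$, which is exactly the exponent in the statement.

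It remains to check the side conditions of Corollary 1. The first is $\|r\| \ge 1/N$. We have $r \asymp N^{1/(t-1) - \kappa} = N^{-(t-1)/(2t-1)}$, using $t^2 - 2t + 1 > 0$. This exponent lies strictly between $-1$ and $0$. So for large $N$ we get $1/N \le r \le 1/2$, hence $\|r\| = r \ge 1/N$. The second is $x_0 > 0$ and $K\le M$, which hold for large $N$ since $\kappa(t-1)/t = t/(2t-1) < 1$.

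The main (though mild) obstacle is the bookkeeping of integer rounding. The choices of $K$ and $M$ are integers, so $a^K$ and $b^M$ only match the target powers of $N$ up to the constant factors $a$ and $b$, and $q,d,u,v$ introduce further constants. I will make sure that all of these are absorbed into the implicit constant of $\Omega(\cdot)$ and that none affects an exponent. With that done, the bound follows from (2).
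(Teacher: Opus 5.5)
Your proposal is correct and follows essentially the same route as the paper. The paper uses $r = q_2 b^{n_0}/a^m$, which differs from yours only by the harmless factor $1/u$; it uses the same top index $M = n_0$, and it balances $(a/b)^K$ against $N b^M/a^K$ in the same way to get $x_0 \asymp \alpha N \asymp N^{t/(2t-1)}$ and $r \asymp N^{-(t-1)/(2t-1)} \gg 1/N$.

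One small fix is needed. With your choice $x_0 = q(a/b)^K - d$, the step ``$s \ge x_0 \Rightarrow n \ge K$'' only holds once $q(a/b)^{K-1}(a/b-1) \ge 2d$, which is true for large $N$ but should be stated. The paper's choice $x_0 = q(a/b)^K + d$ gives this step immediately.
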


\begin{proof}
    Let $q = q_1/q_2$ where $q_1, q_2$ are coprime. Let $n_0$ be the largest integer such that $q(a/b)^{n_0} - d < N$. Consider $r = b^{n_0}q_2/a^m$ where $m$ is an integer to be determined. Now if $q(a/b)^n - d < s < q(a/b)^n + d$, then $q_1a^{n-m}b^{n_0-n} - dr < sr < q_1a^{n-m}b^{n_0-n} + dr$. So whenever $n_0 \geq n \geq m$ (meaning that $s \geq q(a/b)^m + d$) we'd have $||sr|| < dr < dq_2b^{n_0}/a^m = \alpha$ and let $x_0 = q(a/b)^m+d$. Now just note that $x_0 + 2\pi\alpha N < 2\pi(x_0+\alpha N) = O_{q,d}((a/b)^m+N(b^{n_0}/a^m))$. To make this small consider when $a^{m_0}/b^{m_0} = Nb^{n_0}/a^{m_0}$. Letting $a = b^{\gamma}$, then this rearranges to $b^{m_0(2\gamma - 1) - n_0} = N$, or $m_0 = (\log _bN + n_0)/(2\gamma -1)$. Setting $m = \lfloor m_0 \rfloor$ gives $(a/b)^m+N(b^{n_0}/a^m) = O_{a,b,q}(N^{\gamma/(2\gamma -1)})$. Now note that $1/N < r$ but also $r = \Theta(N^{(1-\gamma)/(2\gamma-1)})$ when $N$ is large. So by corollary 1 we have for the worst case time, $\max_{\sigma} T(N,\sigma, L_N) = \Omega(N^{2 - \gamma/(2\gamma -1)}) = \Omega(N^{(3\gamma - 2)/(2\gamma-1)})$, and we're done. 
\end{proof}

\noindent If we apply this to Tokuda's sequence, we see that it has a worst case time of at least $\Omega(N^{1.2695...})$. Also note an interesting phenomenon: if we set $b = 1$, then $\log_ba = +\infty$ and that gives $\max T = \Omega (N^{3/2})$ which matches what we got in theorem 2.  \newline

\noindent Now, we want to bound the sizes of $x_0, \alpha$ in theorem 1 such that a corresponding $r$ that's not too close to an integer is guaranteed to exist.

\begin{theorem}
There exist some constants $c > 0$ such that, for any given $N > 4$, strictly decreasing gap sequence $L$, and for any $1 \leq x < N$ where $l(x, N) \geq 1$, there exists a complex number $z$ satisfying $z^N = 1$ and $|1 - z| \geq cN^{-1/2}$ for some universal positive constant $c$ such that 
    \begin{align}
        T(N, \sigma, L) \geq \frac{1}{4\pi} \cdot \frac{|p(N, \sigma, z) - p(N, id, z)|}{x + N^{1-\frac{1}{2l(x,N)}}}
    \end{align}
for any permutation $\sigma$ of $1, 2, \cdots, N$. 
\end{theorem}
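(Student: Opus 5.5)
Set $l = l(x,N)$ and let $s_1 > s_2 > \cdots > s_l$ be the terms of $L$ in $[x, N)$. The plan is to apply Theorem 1 with $x_0 = x$ and $r = k/N$ for an integer $k$ with $1 \le k \le N-1$; then $z = \exp(2\pi i k/N)$ automatically satisfies $z^N = 1$. Theorem 1 gives
\begin{align*}
T(N,\sigma,L) \geq \frac{|p(N,\sigma,z)-p(N,id,z)|}{2(x+\pi\alpha N)},
\end{align*}
where $\alpha = \max_{1\le j\le l} \|k s_j/N\|$. If we can choose $k$ with $\alpha \le N^{-1/(2l)}$, then $2(x + \pi \alpha N) \le 2\pi (x + N^{1-1/(2l)})$. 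This gives (6), even with the better constant $1/(2\pi)$, so the stated $1/(4\pi)$ leaves slack for the rounding needed below.

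\textbf{Choosing $k$.} This is a simultaneous Diophantine approximation, done by Dirichlet's pigeonhole argument on the torus. Put $Q = \lceil N^{1/(2l)}\rceil$ and split $[0,1)^l$ into $Q^l$ boxes of side $1/Q$. Consider the $M+1$ points $\bigl(\{k s_1/N\},\dots,\{k s_l/N\}\bigr)$ for $k = 0,1,\dots,M$. As soon as $M \ge Q^l$, two of them, say for $k' < k''$, fall in the same box. Taking $k = k'' - k''' $ with $k''' = k'$, that is $k = k''-k'$, gives $1 \le k \le M$ and $\|k s_j/N\| \le 1/Q \le N^{-1/(2l)}$ for every $j$. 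Since $Q^l \approx N^{1/2}$, we can take $M$ of order $\sqrt N$, and in particular $M < N$ for $N > 4$ (with rounding of $Q$ handled, for instance by taking $Q = \lfloor N^{1/(2l)} \rfloor$ and absorbing the loss into the factor $1/(4\pi)$). So $k \not\equiv 0 \pmod N$ and $z \ne 1$.

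\textbf{Main obstacle: the bound $|1-z| \ge cN^{-1/2}$.} This says $\|k/N\| \gtrsim N^{-1/2}$, but pigeonhole alone only guarantees $1 \le k \le M \approx \sqrt N$. If $k$ is as small as $1$, then $\|k/N\| = 1/N$, which is far too small. I would deal with this by adding the coordinate $k/N$ itself to the pigeonhole, placing the points in $[0,1)^{l+1}$. The catch is that this changes the exponent from $1/(2l)$ to $1/(2(l+1))$. My first alternative is to keep $l$ coordinates and instead use $k$ ranging up to about $N/2$, with boxes of side $N^{-1/(2l)}$ in the $s_j$ coordinates and side about $N^{-1/2}$ in the extra coordinate $k/N$. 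The total box count is then $N^{1/2}\cdot N^{1/2} = N$, so pigeonhole on $k \in \{0,\dots,N-1\}$ is borderline. To make it work I would halve the side lengths at a constant cost, which is where the constant $c$ and the factor $1/(4\pi)$ come from.

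More concretely, for the key step I would use $[0,1)^l$ boxes of side $\approx N^{-1/(2l)}$ (about $\sqrt N/2$ of them) together with $k$ ranging over $0,\dots,N-1$. Pigeonhole then puts at least $2\sqrt N$ values of $k$ in one box. Among any $2\sqrt N$ residues mod $N$, some pair has difference $d$ with $\|d/N\| \ge$ roughly $\sqrt N/N = N^{-1/2}$, since the residues cannot all lie in an arc of length $\sqrt N$. I would take $k$ to be such a difference. Then $|1 - z| = 2\sin(\pi\|k/N\|) \ge 4\|k/N\| \ge cN^{-1/2}$ by the estimate from the proof of Theorem 1, and the approximation bound $\|k s_j/N\| \lesssim N^{-1/(2l)}$ is preserved. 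Feeding $\alpha$ and $x_0 = x$ into Theorem 1 finishes the argument.
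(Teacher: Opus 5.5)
Your final argument is correct and is essentially the paper's proof. The paper's Lemma 2 likewise applies Theorem 1 with $x_0 = x$ and $r = q/N$. It gets $\|r\| \gtrsim N^{-1/2}$ from a pigeonhole with multiplicity: at least $M+1$ of the points $k$ (with $M \ge \lfloor \sqrt N/2 \rfloor$) fall in one box of side $1/Q$, and it takes the extreme pair. The only difference is that it restricts $k$ to $0,\dots,MQ^{l(x,N)} \le N/2$, so there is no wrap-around mod $N$, whereas you let $k$ run over all residues and pick a pair at large circular distance.

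For the exact constant $1/(4\pi)$, take $Q = \lfloor N^{1/(2l)} \rfloor$ rather than cutting the box count to about $\sqrt N/2$, which after rounding can cost a further factor of up to $2$ in $\alpha$. With this $Q$ there are at most $\sqrt N$ boxes, so some box holds at least $\sqrt N$ values of $k$ and contains a pair with $\|k/N\| \ge (\sqrt N - 1)/(2N) > N^{-1/2}/4$, while $\alpha \le 1/Q < 2N^{-1/(2l)}$.
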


\begin{proof}
We first need a stronger version of Dirichlet's theorem on Diophantine approximation.

\begin{lemma}
Let $a_1, \cdots, a_m$ be integers satisfying $1 \leq a_i \leq N$. For any positive integers $Q$, $M$, if $2MQ^m \leq N$, then there exists some real number $r$ satisfying $||r|| \geq M/N$ such that $Nr$ is an integer, $||a_ir||\leq 1/Q$ for all $1 \leq i \leq m$. 
\end{lemma}

\begin{proof}
We work in $m$ dimensions. Consider the points
    \begin{align*}
        v_k = \Big(\Big\{\frac{ka_1}{N}\Big\}, \cdots, \Big\{\frac{ka_m}{N}\Big\} \Big)
    \end{align*}
where $k = 0, 1, \cdots, MQ^m$. Here $\{\cdot \}$ represents the fractional part. \newline

\noindent Split $[0, 1)^m$ into $Q^m$ identical small $m$-dimensional hypercubes, each of side length $1/Q$. Since there are $MQ^m+1$ such points $v_k$, by the pigeonhole principle, there must be some small hypercube of side length $1/Q$ which contains at least $M+1$ of the $v_k$'s. We may therefore pick numbers $k, k'$, where $k > k'$, $v_k, v_{k'}$ lie in the same small hypercube and $q = k - k' \geq M$. This means that for every $i$ we'd have 
    \begin{align*}
        \Big|\Big|\frac{qa_i}{N}\Big|\Big| \leq \Big|\Big\{\frac{ka_i}{N}\Big\} - \Big\{\frac{k'a_i}{N}\Big\} \Big| \leq \frac{1}{Q}
    \end{align*}
\newline

\noindent Let $r = q/N$. We then have that $Nr = q \in \mathbb{Z}, ||a_ir|| = ||qa_i/N|| \leq 1/Q$ for all $1 \leq i \leq m$. Now as $q \geq M$ we have $r = q/N \geq M/N$. Since $q \leq MQ^m \leq N/2$, so $r \in [0, 1/2]$, and therefore $||r|| = r \geq M/N$.
\end{proof}

\noindent Now fix a random $1 \leq x < N$ where $l(x, N) \geq 1$. In the above lemma, we let $m = l(x, N), Q=\lfloor N^{\frac{1}{2l(x,N)}} \rfloor > \frac{N^{\frac{1}{2l(x,N)}}}{2},  M = \lfloor \frac{N}{2Q^m} \rfloor \geq \lfloor \sqrt{N}/2 \rfloor > \frac{\sqrt{N}}{4}$ since $N > 4$. This therefore gives that there is a real number $r$ with $Nr \in \mathbb{Z}$ such that for any $s \in L$ where $x \leq s < N$ we have $||sr|| < 2N^{-\frac{1}{2l(x,N)}}$, while $||r|| > N^{-1/2}/4$. Now let $z = \exp({2\pi ir})$. This gives $z^N = 1, |z - 1| > cN^{-1/2}$ for some fixed universal constant $c >0$. \newline

\noindent So letting $x_0 = x, \alpha = 2N^{-\frac{1}{2l(x,N)}} $, by theorem 1 we have for our chosen $z$, that $|z| = 1, |z - 1| > cN^{-1/2}$, and for any permutation $\sigma$ of $1, 2, \cdots, N$, 
    \begin{align*}
        t(N, \sigma, L) \geq \frac{1}{4\pi} \cdot \frac{|p(N, \sigma, z) - p(N, id, z)|}{x + N^{1-\frac{1}{2l(x,N)}}}
    \end{align*}
\noindent and the theorem follows.  
\end{proof}

\noindent We may reformulate the theorem in the following manner:
\begin{corollary}
There exists some constants $c > 0$ such that, for any given $N > 4$, a strictly decreasing gap sequence $L = (l_1, l_2, \cdots, l_k)$ with $l_1 < N, l_k = 1$, there exists a complex number $z$ satisfying $z^N = 1$ and $|z - 1| > cN^{-1/2}$ such that for any permutation $\sigma$ of $1, 2, \cdots, N$ we have
    \begin{align}
        T(N, \sigma, L) \geq \frac{1}{4\pi} \cdot \frac{|p(N, \sigma, z) - p(N, id, z)|}{\min_{1 \leq s \leq k-1}(l_{s+1} + N^{1-\frac{1}{2s}})}
    \end{align}
\end{corollary}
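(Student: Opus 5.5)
The plan is to deduce the corollary from Theorem 4, or more precisely from its proof. One point needs care: in Theorem 4 the complex number $z$ depends on $x$, whereas the corollary asks for a single $z$ that works with the minimum in the denominator. So I will first choose the index that attains the minimum and then build $z$ for that index only.

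Let $s^\ast\in\{1,\dots,k-1\}$ attain $\min_{1\le s\le k-1}\bigl(l_{s+1}+N^{1-\frac{1}{2s}}\bigr)$. The terms of $L$ lying strictly above $l_{s^\ast+1}$ are exactly $l_1,\dots,l_{s^\ast}$, and all of them are $<N$. I will apply Lemma 2 with $m=s^\ast$ and $a_i=l_i$ for $1\le i\le s^\ast$. I take $Q=\lfloor N^{1/(2s^\ast)}\rfloor$ and $M=\lfloor N/(2Q^{s^\ast})\rfloor$, exactly as in the proof of Theorem 4. This gives a real $r$ with $Nr\in\mathbb Z$, $||r||>N^{-1/2}/4$, and $||l_ir||<2N^{-1/(2s^\ast)}$ for $i\le s^\ast$. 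Setting $z=\exp(2\pi i r)$ then gives $z^N=1$ and $|z-1|\ge 4||r||>cN^{-1/2}$, using the inequality $|z-1|\ge 4||\arg(z)/(2\pi)||$ from the proof of Theorem 1.

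Next I feed this $z$ into the argument of Theorem 1 with $x_0=l_{s^\ast+1}$ and $\alpha=2N^{-1/(2s^\ast)}$. Theorem 1 as stated requires $||sr||\le\alpha$ for every gap $s\ge x_0$, which would also include $l_{s^\ast+1}$ itself. That would force $m=s^\ast+1$ and worsen the exponent. However, the proof of Theorem 1 only uses the bound $|z^s-1|\le 2\pi\alpha$ for swaps of distance strictly greater than $x_0$: $u$ is defined as the last index with $j_u-i_u>x_0$. The remaining swaps all have distance $\le x_0$, and each moves two entries by at most $x_0$, so together they contribute at most $2tx_0$. I will therefore rerun that proof with the strict inequality and obtain
\begin{align*}
t(N,\sigma,L)\ \ge\ \frac{|p(N,\sigma,z)-p(N,id,z)|}{2\bigl(l_{s^\ast+1}+2\pi N^{1-\frac{1}{2s^\ast}}\bigr)}\ \ge\ \frac{1}{4\pi}\cdot\frac{|p(N,\sigma,z)-p(N,id,z)|}{l_{s^\ast+1}+N^{1-\frac{1}{2s^\ast}}}.
\end{align*}
Finally, $T\ge t$, and the denominator equals the minimum by the choice of $s^\ast$, which gives (7).

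The main obstacle is exactly this off-by-one between the condition ``$s\ge x_0$'' in Theorem 1 and the quantity $l(x,N)$ in Theorem 4. Applying Theorem 4 verbatim with $x=l_{s+1}$ counts $s+1$ terms, not $s$. The alternative, taking $x=l_{s+1}+1$, gives $l(x,N)=s$ but adds an extra $+1$ to the denominator, which would spoil the constant $\frac1{4\pi}$ unless it is absorbed. I expect the cleanest fix to be the strict-inequality rerun above; the fallback is to absorb the $+1$ into a slightly smaller constant, since $l_{s+1}\ge 1$.
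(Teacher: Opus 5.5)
Your proposal is correct and is essentially the paper's argument: the paper's proof is just the remark that the corollary follows directly from Theorem 4, which amounts to building $z$ via Lemma 2 at the index $s$ attaining the minimum, exactly as you do. Your treatment of the off-by-one fills in a detail that one-line proof skips, since applying Theorem 4 verbatim with $x=l_{s+1}$ gives $l(x,N)=s+1$. Your fix is valid because the proof of Theorem 1 only uses $||sr||\le\alpha$ for gaps $s>x_0$. Your fallback $x=l_{s+1}+1$ would also work without shrinking the constant, because $2(l_{s+1}+1+2\pi N^{1-\frac{1}{2s}})\le 4\pi(l_{s+1}+N^{1-\frac{1}{2s}})$ whenever $l_{s+1}\ge 1$.
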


\noindent The proof of this follows directly from theorem 4. Combining this with corollary 1 gives:

\begin{theorem}
    There exists some universal constant $C_1 > 0$ such that, for any given large $N$, a strictly decreasing gap sequence $L = (l_1, l_2, \cdots, l_k)$ with $l_1 < N, l_k = 1$, there exists a permutation $\sigma$ of $1, 2, \cdots, N$, such that
    \begin{align}
        T(N, \sigma, L) > \frac{C_1N^2}{\min_{1 \leq s \leq k-1}(l_{s+1} + N^{1-\frac{1}{2s}})}
    \end{align}
\end{theorem}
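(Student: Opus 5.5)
The plan is to combine Corollary 2 with the remark following Lemma 1, which turns a root of unity that is not too close to $1$ into a permutation with large potential difference. Fix $L=(l_1,\dots,l_k)$ with $l_1<N$ and $l_k=1$. Corollary 2 gives a $z$ with $z^N=1$ and $|z-1|>cN^{-1/2}$ such that, for every permutation $\sigma$,
\begin{align*}
T(N,\sigma,L)\geq \frac{1}{4\pi}\cdot\frac{|p(N,\sigma,z)-p(N,id,z)|}{\min_{1\leq s\leq k-1}(l_{s+1}+N^{1-\frac{1}{2s}})}.
\end{align*}
So it suffices to produce one $\sigma$ with $|p(N,\sigma,z)-p(N,id,z)|\geq C_0N^2$ for a universal $C_0$. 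Then $C_1=C_0/(4\pi)$ works, up to the strict inequality, which is absorbed by shrinking $C_1$.

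To get such a $\sigma$, I write $z=\exp(2\pi i r)$. Since $|z-1|\leq 2\pi\|r\|$ (shown in the proof of Theorem 1), we get $\|r\|\geq cN^{-1/2}/(2\pi)$, and for large $N$ this is at least $1/N$. This is exactly the hypothesis $\|\arg(z)/(2\pi)\|\geq 1/N$ in the remark after Lemma 1. I then run that argument with $z_i=z^i$, where only the special case of Lemma 1 is needed. If $|p(N,id,z)|\leq c_0N^2/2$, take $\sigma=\omega$ from Lemma 1; the reverse triangle inequality gives a difference of at least $c_0N^2/2$. Otherwise, take $\sigma$ to be a cyclic shift of positions by some $M$ with $|z^M-1|$ bounded below. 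Then $p(N,\sigma,z)=z^{\pm M}p(N,id,z)$ up to a boundary correction of size $O(MN)$. This correction is harmless in the intended setting because $z^N=1$: the cyclic shift permutes the exponents modulo $N$ exactly, so $p(N,\sigma,z)=z^{-M}p(N,id,z)+O(N)\cdot(\text{wraparound terms})$. Concretely, $\sum_i \sigma(i)z^i=\sum_j j z^{j-M}$ with the wrapped indices shifted by $N$, and $z^N=1$ removes the phase error. The wrapped values differ from the unwrapped ones by exactly $N$ on $M$ positions, contributing $N\sum_{j\leq M}z^{j-M}$, which has modulus at most $N/|1-z|\cdot 2=O(N^{3/2})$. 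That is $o(N^2)$.

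The main obstacle is choosing $M$ so that $|z^M-1|\geq 1/2$ while controlling that wraparound error. Since $z\neq1$ is an $N$-th root of unity, the powers $z^M$ for $1\leq M\leq N$ run over a cyclic subgroup of order at least $2$. If the order is at least $3$, some power has argument at least $2\pi/3$ in absolute value, giving $|z^M-1|\geq\sqrt3$. If the order is $2$, then $z=-1$ and $M=1$ gives $|z^M-1|=2$. In either case
\begin{align*}
|p(N,\sigma,z)-p(N,id,z)|\geq |z^{-M}-1|\,|p(N,id,z)|-O(N^{3/2})\geq \tfrac{c_0}{4}N^2
\end{align*}
for large $N$. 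This gives the universal $C_0$, and substituting into Corollary 2 completes the proof.
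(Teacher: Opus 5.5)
Your proof is correct and follows the paper's route: you take the $z$ from Corollary 2, check $\|r\|\geq 1/N$ for large $N$, and apply the permutation construction from the remark after Lemma 1 (the mechanism behind Corollary 1) to get $|p(N,\sigma,z)-p(N,id,z)|\geq C_0N^2$. You handle the cyclic shift more carefully than the paper does, and two simplifications are available: for $z^N=1$ the shift gives exactly $p(N,\sigma,z)=z^{\pm M}p(N,id,z)$, so the wraparound term is zero, and since $p(N,id,z)=Nz/(z-1)$ has modulus $N/|z-1|=O(N^{3/2})$, only the Lemma 1 branch actually occurs for large $N$.
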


\noindent Finally, we give a requirement for a gap sequence to be able to achieve a worst-case time of $N$ times a polynomial of $\log N$:
 
\begin{theorem}
    Let $L_1, L_2, \cdots, L_n, \cdots$ be a sequence of strictly decreasing gap sequences where $L_i$ begins with at most $i$ for every $i \geq 1$. Denote the $j$th term of $L_i$ as $l_{i, j}$. If the worst case time of running shellsort on any array of length $N$ using $L_N$ is $O(N(\log N)^\mu)$ (where $\mu \geq 1$ is any fixed positive real number) as $N$ goes to infinity, then for any fixed $\epsilon > 0$ and letting $s_N =   \lfloor \log N/((2\mu+\epsilon)\log \log N) \rfloor$ we must have $l_{N, s_N} = \Omega(N/(\log N)^\mu)$ as $N$ goes to infinity, where the implied constant is independent of the choice of $\epsilon$ (i.e. only depending on the choice of the $L_n$'s).   
\end{theorem}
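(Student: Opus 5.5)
The plan is to apply Theorem 5 with the single index $s = s_N - 1$, so that the minimum in (8) is bounded above by $l_{N,s_N} + N^{1-\frac{1}{2(s_N-1)}}$. If the worst case of $L_N$ on arrays of length $N$ is at most $C N(\log N)^\mu$, then Theorem 5 gives
\begin{align*}
    C N (\log N)^\mu \geq \max_\sigma T(N,\sigma,L_N) > \frac{C_1 N^2}{l_{N,s_N} + N^{1-\frac{1}{2(s_N-1)}}}.
\end{align*}
Rearranging yields
\begin{align*}
    l_{N,s_N} + N^{1-\frac{1}{2(s_N-1)}} > \frac{C_1}{C}\cdot \frac{N}{(\log N)^\mu}.
\end{align*}
The remaining task is to show that the second term on the left is negligible compared with $N/(\log N)^\mu$. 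Then $l_{N,s_N} \geq \frac{C_1}{2C} N/(\log N)^\mu$ for large $N$, and the constant $C_1/(2C)$ depends only on the $L_n$ (through $C$), not on $\epsilon$.

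We need $N^{-\frac{1}{2(s_N-1)}} = o((\log N)^{-\mu})$, i.e.
\begin{align*}
    \frac{\log N}{2(s_N-1)} - \mu \log\log N \to \infty.
\end{align*}
Since $s_N - 1 < s_N \leq \log N/((2\mu+\epsilon)\log\log N)$, we have $\frac{\log N}{2(s_N-1)} > \frac{(2\mu+\epsilon)}{2}\log\log N$. The difference is therefore at least $\frac{\epsilon}{2}\log\log N \to \infty$. This gives $N^{1-\frac{1}{2(s_N-1)}} \leq N (\log N)^{-\mu-\epsilon/2} = o(N/(\log N)^\mu)$, which completes the argument.

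The main obstacle is the indexing and the degenerate cases, and I would handle these before running the estimate above. First, $s_N\to\infty$, so $s_N - 1 \geq 1$ for large $N$. Second, Theorem 5 requires $L_N$ to end in $1$ and the index $s$ to range over $1,\dots,k-1$, so I need $k > s_N - 1$, i.e.\ $L_N$ must have at least $s_N$ terms. I would argue this by contradiction. If $L_N$ has $k \leq s_N - 1$ terms (with $l_k = 1$), take $s = k-1$ in Theorem 5. This gives denominator $1 + N^{1-\frac{1}{2(k-1)}} \leq 1 + N^{1-\frac{1}{2(s_N-1)}}$, and the same estimate produces a lower bound $\gg N(\log N)^{\mu+\epsilon/2}$, contradicting the $O(N(\log N)^\mu)$ hypothesis. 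The case $k = 1$, where $L_N = (1)$, is insertion sort and is quadratic, so it is also excluded.

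Finally, if the paper's indexing of $l_{N,s}$ in Theorem 5 is offset by one relative to $s$ (the term paired with $s$ is $l_{s+1}$), then choosing $s = s_N - 1$ pairs exactly with $l_{N,s_N}$, as used above. The bound $N^{-1/(2(s_N-1))}$ is even smaller than $N^{-1/(2s_N)}$ would give, so the shift costs nothing.
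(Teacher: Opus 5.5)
Your proof is correct and follows the paper's argument: apply (8) with $s = s_N - 1$, then use $N^{1-\frac{1}{2(s_N-1)}} < N/(\log N)^{\mu+\epsilon/2}$ to absorb the second term, so that $l_{N,s_N} \geq (D-o(1))N/(\log N)^\mu$ with $D$ independent of $\epsilon$. You also handle the degenerate cases correctly: you check that $s_N \geq 2$ for large $N$ and that $L_N$ has at least $s_N$ terms, via the contradiction with $s = k-1$ and the insertion-sort case. The paper's proof assumes both tacitly.
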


\begin{proof}
    Consider the denominator in (8) with $L = L_N$ given by $\min_{1 \leq s \leq k-1}(l_{N,  s+1} + N^{1-\frac{1}{2s}})$ where $k$ is the length of $L_N$. We want this number to be at least $DN/(\log N)^\mu$ for some constants $D,\mu > 0$. Fix $\epsilon > 0$, take $s = s_N - 1$. This gives $l_{N,s_N} + N/(\log N)^{\mu + \epsilon/2} \geq DN/(\log N)^\mu$, so $l_{N, S_N} \geq (D - o(1))N/(\log N)^\mu$, and we're done.
\end{proof}

\section*{Acknowledgements}
\noindent The author would also like to thank Professor Stefan Kiefer at St John's College, the University of Oxford for his insightful feedbacks and advices on an earlier draft of this paper. 

\section*{Declaration of AI use}
OpenAI's ChatGPT has been used for providing the initial structural framework of the proof of Lemma 2, which has been subsequently verified, refined and written by the author.

\bibliographystyle{plain}

\end{document}